\journal{Discrete Applied Mathematics}
\newtheoremstyle{definition}
{9pt}           
{9pt}           
{}              
{0cm}           
{\bfseries}     
{\ }            
{ }             
{}              
\theoremstyle{definition}
\newtheorem{definition}{Definition}[section]
\newtheorem{theorem}{Theorem}[section]
\newtheorem{example}{Example}[section]
\begin{document}

\begin{frontmatter}	
\title{A Polynomial-Time Algorithm for Special Cases of the Unbounded Subset-Sum Problem}

\author{Majid Salimi}
\ead{M.Salimi@eng.ui.ac.ir}
\author{Hamid Mala\corref{cor1}}
 \ead{h.mala@eng.ui.ac.ir}

\address{Faculty of Computer Engineering, University of Isfahan, Isfahan, Iran}
\cortext[cor1]{Corresponding author}

\begin{abstract}
	The Unbounded Subset-Sum Problem (USSP) is defined as: given sum $s$ and a set of integers $W\leftarrow \{p_1,\dots,p_n\}$ output a set of non-negative integers $\{y_1,\dots,y_n\}$ such that $p_1y_1+\dots+p_ny_n=s$. The USSP is an NP-complete problem that does not have any known polynomial-time solution. There is a pseudo-polynomial algorithm for the USSP problem with  $O((p_{1})^{2}+n)$ time complexity and $O(p_{1})$ memory complexity, where $p_{1}$ is the smallest element of $W$ \cite{PH}. This algorithm is polynomial in term of the number of inputs, but exponential in the size of $p_1$. Therefore, this solution is impractical for the large-scale problems.\\
	In this paper, first we propose an efficient polynomial-time algorithm with $O(n)$ computational complexity for solving the specific case of the USSP where $ s> \sum_{i=1}^{k-1}q_iq_{i+1}-q_i-q_{i+1}$, $q_i$'s are the elements of a small subset of $W$ in which $gcd$ of its elements divides $s$ and $2\le k \le n$. Second, we present another algorithm for smaller values of $s$ with $O(n^2)$ computational complexity that finds the answer for some inputs with a probability between $0.5$ to $1$. Its success probability is directly related to the number of subsets of $W$ in which $gcd$ of their elements divides $s$. This algorithm can solve the USSP problem with large inputs in the polynomial-time, no matter how big inputs are, but, in some special cases where $s$ is small, it cannot find the answer.	
\end{abstract}

\begin{keyword} Subset-Sum Problem, NP-complete, Pseudo-polynomial algorithm.
	\end{keyword}
\end{frontmatter}
\section{Introduction} 
 The Subset-Sum Problem (SSP) is a well-known NP-complete problem which is defined as: given a set of positive integers $W\leftarrow \{p_1,\dots,p_n\}$ and a positive integer $s$ compute the set of boolean numbers $\{y_1,\dots,y_n\}$ such that  $p_1y_1+\dots+p_ny_n = s$.
 \begin{definition} (Subset-Sum Problem (SSP)).
 	\label{SSP}
 	Let $W\leftarrow \{p_1, p_2, \dots,p_n\}$ be a set of $n$ positive random integers. The SSP is the problem of finding a subset of $W$, where the sum of its members is equal to a given value $s$. 
 \end{definition}
Until now there exists no polynomial-time solution for the SSP, but there exist some pseudo-polynomial algorithms which can solve special instances of the SSP with small inputs. The complexity of the best possible solutions of the SSP are equal to $O(2^{n/2})$ \cite{KCSSP}, $O(s\sqrt{n})$ \cite{JVSSP} and $O(nc)$ \cite{P99}, where $c$ is the size of the largest element of $W$. In 2014, Cygan et al. showed that the SSP cannot be computed in time $s^\epsilon poly(n)$ under the Set Cover Hypothesis, where $\epsilon<1$ \cite{Cygan}. A variant of the SSP is the unbounded subset sum problem which is defined in Definition 1.2.
 \begin{definition} (Unbounded Subset-Sum Problem (USSP)).
 	\label{DSSP}
 	Consider a set $W\leftarrow \{p_1, p_2, \dots,p_n\}$ of $n$ random positive integers. Given a non-negative integer $s$, the USSP is the problem of finding the set of non-negative integers $\{y_1,\dots,y_n\}$ such that $s=\sum_{i=1}^{n}y_ip_i$, where $p_1 < p_2 < \dots <p_n$. Note that there is no condition on non-negative integers $y_i$'s (i.e. they can be zero or any positive integer). We use the notation $n-$USSP to explicitly indicate that the size of $W$ is $n$.
 \end{definition}
 The USSP is an NP-complete problem and the best known solution for it is a pseudo-polynomial-time algorithm with time complexity $O((p_{1})^{2}+n)$ and memory complexity $O(p_{1})$, where $p_{1}$ is the smallest element of $W$ \cite{PH}. This solution is exponential in term of the magnitude of the input and polynomial in the size of the input. 
 There is a reduction from the USSP to SSP, and the SSP problem is itself reducible to the partitioning problem \cite{reduce}, which is partitioning a given set of positive integers $W$ into two subsets $W_1$ and $W_2$ such that the sum of the elements of $W_1$ equals the sum of the elements in $W_2$. The reduction of USSP to SSP of \cite{reduce} is not correct since by transforming an instance of the SSP problem to an instance of the USSP problem the set of possible answers dramatically increases. Note that most of the answers of the transformed USSP problem do not work for the source SSP problem. If we want to compute the answer of the USSP problem which works for SSP too, we need to add some constraints. These constraints make the USSP problem the constraint satisfaction problem, which is NP-complete.\\
 The USSP can be considered as a special case of Unbounded Knapsack Problem (UKP), where the weight and the value of elements is equal. The only difference is that in USSP the amount of $\sum_{i=1}^{n}y_ip_i$ must be exactly equal to $s$, but in the UKP it can be equal to or less than $s$.
  Kellerer et al. described the UKP problem in Chapter Eight of Knapsack Problems \cite{UKP}.
  \subsection{Our Contribution}
  In this paper, we propose two versions of an algorithm for solving the USSP problem with various conditions on $s$. By using these two algorithms, we can solve many instances of USSP with large values of $n$ and $s$. 
  \begin{itemize} 
  	\item Algorithm 3 solves the USSP for $s >\sum_{i=1}^{k-1}q_iq_{i+1}-q_i-q_{i+1}$ with $O(n)$ computational complexity, where $q_i$'s are the elements of a small subset of $W$ that the $gcd$ of its elements divides $s$.
  	\item Algorithm 4 is a deterministic algorithm which can compute the answer of the USSP with probability between $0.5$ to $1$ for $p_1\le s \le \sum_{i=1}^{k-1}q_iq_{i+1}-q_i-q_{i+1}$. Its success probability is directly related to the number of subsets of $W$, in which $gcd$ of its elements divides $s$ (i.e. If the set $W$ contains a large number of prime integers, this probability will be close to one). Anyway, the USSP with small $n$ and $s$ does not have large space and can be solved by brute force algorithms. 
  \end{itemize}
  	\begin{table}[h]
  	\label{Table 1}
  	\caption{Brief Review of the Proposed Algorithms}
  	\begin{center}
  		\begin{tabular}{|p{65pt}|p{63pt}|p{40pt}|p{65pt}|}
  			\hline
  			\textbf{Algorithm}&\textbf{Complexity} & \textbf{$s$} &\textbf{Probability}    \\
  			\hline	
  			Algorithm 3&$O(n)$& $s>z_1$ $^*$  &  1\\
  			\hline	
  			Algorithm 4&$O(n^2)$& $s\le z_1$  &  $1-x$ $^{**}$\\
  			\hline
  \end{tabular} \\\setlength\tabcolsep{5pt}	
	\begin{flushleft} 	 $^{*}$ $z_1\leftarrow \sum_{i=1}^{k-1}q_iq_{i+1}-q_i-q_{i+1}$, where $q_i$'s are the elements of a small subset of $W$ in which $gcd$ of its elements divides $s$.\\
	$^{**}$ $x\leftarrow \prod_{i=1}^{i=f} (1-(\frac{z_{1}}{2z_{i}}))$, where $1\le f <n$ is the number of subsets of $W$ in which $gcd$ of their elements divides $s$ and Algorithm 4 can find them.
	 \end{flushleft} 
  	\end{center}
  \end{table}
\subsection{Paper Organization}
The rest of this paper is organized as follows. In Section \ref{rel}, the related work are discussed. In Section \ref{2ussp}, $2$-USSP problem is examined. An algorithm for the USSP with large values of $s$ is presented in Section \ref{n-ussp}. In Section \ref{small}, an algorithm for handling small values of $s$ is proposed. Finally, the paper is concluded in Section \ref{Conclution}.

 \section{Related Work}
 \label{rel}
 In 1996, Alfonsin proposed another variation of the SSP called Subset-sum with Repetitions Problem (SRP): given positive integers $s$, $r_1,\dots,r_n$ and an increasing set of coprime integers $W\leftarrow \{p_1,\dots,p_n\}$ there exist positive integers $y_i$, $y_i\le r_i$ such that $y_1p_1+\dots+y_np_n=s$ \cite{RA}. He stated that the SRP problem is NP-complete even if $W$ is a super-increasing set and $r_i\leftarrow 0, 1$ for all $i$. He also found a polynomial algorithm for the special case where the set $W\leftarrow \{p_1,\dots,p_n\}$ is a chain sequence. It means that $p_i|p_{i+1}$. Anyway their approach was not universal \cite{RA}.\\
 In 1996, Hansen and Ryan proposed an algorithm with time complexity $O((p_{1})^{2}+n)$ and memory complexity $O(p_{1})$ for the USSP problem \cite{PH}, where $W\leftarrow \{p_1,\dots,p_n\}$ are coprime integers. 
 In 2009, Muntean and Oltean proposed an optical solution to the problem of whether an instance of the USSP has an answer or not \cite{Optical}. 
 In 2017, Bringmann showed that the USSP problem can be solved in $O(s\log s)$ \cite{Near-linear}. 
 In 2018, Wojtczak showed that the unbounded subset-sum problem with rational numbers is strongly NP-complete. In other words, no pseudo-polynomial algorithm can exist for solving USSP with rational numbers unless P=NP \cite{ratinal}.\\
 Another variant of this problem is to decide whether an instance of the USSP has a solution or not. This is a YES/NO problem and it is also an NP-complete problem \cite{Optical}. More than one hundred years ago, Ferdinand Georg Frobenius introduced the Frobenius number as defined in Definition \ref{Frobenius number}.
\begin{definition} (Frobenius number).
	\label{Frobenius number}
	Let $W\leftarrow \{p_1,\dots,p_n\}$ be a set of coprime positive integers, where $p_1 < p_2 < \dots <p_n$. The greatest integer which cannot be expressed as a linear combination (with nonnegative integer coefficients) of elements of $W$ is called the Frobenius number $f(p_1,\dots,p_n)$. 
\end{definition}
Computing the Frobenius number for $n=2$ is an easy problem and it is equal to $f(p_1,p_2)=p_1p_2-p_1-p_2$ \cite{Sylveste}, \cite{UnsolvedProblem}. Furthermore, Sylvester showed that half of integers $0,1\dots, p_1p_2-p_1-p_2$ are not representable by any linear combination of $p_1$ and $p_2$ \cite{Sylveste}. In 1994, Davison proposed a quadratic-time algorithm for computing the Frobenius number for $n = 3$ \cite{Davison}.
The problem of finding the Frobenius number for $n>2$ is an NP-hard problem and the time complexity of the best solution for it is equal to $O(p_1\sqrt{n})$ \cite{Frobenius}. The upper bound of Frobenius number for $n>2$ is equal to $f(p_1,\dots,p_n)\le p_1(p_n-1)-p_n$ \cite{Frobenius}.

\section{Preliminaries}
\label{2ussp}
First we focus on the 2-USSP, which denotes the USSP with $n=2$ and $W\leftarrow \{p_1,p_2\}$. In Theorem \ref{YN2} we obtain the answers of the USSP equation $y_1p_1 + y_2p_2 = s$ for any adequately large random integer $s$.
\begin{theorem}  
	\label{YN2}
Let $p_1$ and $p_2$ be non-negative integers, where $p_1<p_2$ and $gcd(p_1,p_2)=1$, then the equation $y_1p_1+y_2p_2=s$ has $l\leftarrow \big\lfloor\frac{s-p_1(sp_{1}^{-1} \bmod p_2)}{p_1p_2}\big\rfloor+1$ answers $(y_1 , y_2)$. Moreover, if $s> p_1(p_2-1)-p_2$, then the equation $y_1p_1+y_2p_2=s$ has at least one answer $(y_1 , y_2)$.
\end{theorem}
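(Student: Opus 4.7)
The plan is to parametrise all integer (not yet non-negative) solutions of $y_1 p_1 + y_2 p_2 = s$ and then count how many of them land in the non-negative orthant. Since $\gcd(p_1,p_2)=1$, the element $p_1$ is invertible modulo $p_2$, so the congruence $p_1 y_1 \equiv s \pmod{p_2}$ has a unique residue class of solutions. Picking the representative in $\{0,1,\dots,p_2-1\}$ gives the distinguished value $y_1^{*} := s p_1^{-1} \bmod p_2$ and forces $y_2^{*} := (s - p_1 y_1^{*})/p_2$, which is an integer by construction. First I would verify this is the \emph{unique} solution with $0 \le y_1 < p_2$.

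Next, I would observe that every integer solution is of the shape $(y_1^{*} + k p_2,\ y_2^{*} - k p_1)$ for some $k \in \mathbb{Z}$, because the kernel of $(a,b) \mapsto a p_1 + b p_2$ on $\mathbb{Z}^2$ is generated by $(p_2,-p_1)$. Since $y_1^{*}\ge 0$ already, the non-negativity of the first coordinate is automatic precisely when $k \ge 0$. The second coordinate $y_2^{*} - k p_1$ is non-negative exactly when $k \le y_2^{*}/p_1$. Hence the number of admissible $k$, and therefore the number of non-negative solutions, is
\[
\Big\lfloor \tfrac{y_2^{*}}{p_1}\Big\rfloor + 1 \;=\; \Big\lfloor \tfrac{s - p_1\,(s p_1^{-1} \bmod p_2)}{p_1 p_2}\Big\rfloor + 1,
\]
which is the claimed value of $l$. (If $y_2^{*}$ happens to be negative, the floor yields $-1$ and the formula correctly returns $0$, so there would be nothing to prove in that degenerate case.)

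For the second assertion I would show that the hypothesis $s > p_1(p_2-1) - p_2$ already forces $y_2^{*} \ge 0$, so at least the $k=0$ solution is admissible. Concretely, since $y_1^{*} \le p_2 - 1$, I have
\[
s - p_1 y_1^{*} \;\ge\; s - p_1(p_2-1) \;>\; -p_2,
\]
and $s - p_1 y_1^{*}$ is divisible by $p_2$, so it must in fact be $\ge 0$. Thus $y_2^{*} \ge 0$ and $(y_1^{*}, y_2^{*})$ is a valid non-negative solution. This is essentially a re-derivation of the Sylvester bound $f(p_1,p_2) = p_1 p_2 - p_1 - p_2$ already cited in the related work.

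I expect no serious obstacle: the only point that requires care is the boundary handling of the floor when $y_2^{*}$ is negative (so that the formula is interpreted consistently as zero solutions), and the small modular-arithmetic calculation showing $s - p_1 y_1^{*} > -p_2$ under the Frobenius-style hypothesis. Everything else reduces to the standard description of integer solutions of a two-variable linear Diophantine equation.
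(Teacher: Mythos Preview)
Your proposal is correct and follows essentially the same route as the paper: define $y_1^{*}=s p_1^{-1}\bmod p_2$, set $y_2^{*}=(s-p_1y_1^{*})/p_2$, parametrise all solutions as $(y_1^{*}+kp_2,\,y_2^{*}-kp_1)$, and count the admissible $k$ via the floor formula, then derive the Sylvester bound from $y_1^{*}\le p_2-1$. Your handling of the boundary case $y_2^{*}<0$ and the divisibility step $s-p_1y_1^{*}>-p_2\Rightarrow s-p_1y_1^{*}\ge 0$ are in fact slightly crisper than the paper's own argument, but there is no substantive difference in approach.
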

\begin{proof}
	Let we rewrite the equation $y_1p_1+y_2p_2=s$ in modulo $p_2$. We have $p_1y_{1} \bmod p_2= s \bmod p_2$. Hence, we can compute an answer for $y_1$ as $y_{1}^{*}=sp_{1}^{-1} \bmod p_2$ (the largest possible value for $y_1^{*}$ is $p_2-1$). With this value for $y_1$ and any other integer value for $y_2$, the USSP equation $y_1p_1+y_2p_2=s$ is satisfied modulo $p_2$. 
	The value $y_{2}^{*}=\frac{s-p_1y_{1}^{*}}{p_2}$ omits the modulo if it is a non-negative integer. So, with this values for $y_1$ and $y_2$ the equation $y_1p_1+y_2p_2=s$ is satisfied (without modulo $p_2$). In fact if the equation $y_1p_1+y_2p_2=s$ has at least one answer, then $y_2^*=\frac{s-p_1y_1^*}{p_2}$ is an integer. Consequently, the only condition that must be held is that $y_{2}^{*}$ should be non-negative. The largest possible value of $y_1$ is equal to $p_2-1$, so the Frobenius number that is the possible largest value of $s$ for which the equation $y_1p_1+y_2p_2=s$ has no answer, is computed by considering $y_{2}^{*}=-1$ as below.
	\begin{align}
	\frac{s-p_1y_{1}^{*}}{p_2}&= -1 \nonumber \\
	\Rightarrow s&= p_1y_{1}^{*}-p_2 
	\end{align}	
	As the maximum possible value for $y_{1}^{*}$ is $p_2-1$, the maximum possible value for $s$ is obtained as below.
	\begin{equation}
	s= p_1 (p_2-1)-p_2 
	\end{equation}
	So the lower bound of $s$ is $p_1p_2-p_1-p_2$ in the sense that for any integer $s > p_1p_2-p_1-p_2$, the equation $p_1y_1+p_2y_2=s$ has at least one non-negative answer. If $s\le p_1p_2-p_1-p_2 $, then there may be no valid positive value for $y_2$ and the value of $y_2$ may be negative. \\

Now suppose $(y^{*}_1, y^{*}_2)$ is an answer for the USSP equation $y_1p_1+y_2p_2=s$. Then one can easily check that $(y^{*}_1+p_2, y^{*}_2-p_1)$ is also an answer for this equation if $y^{*}_{2}-p_1$ is still non-negative. This condition is equivalent to $s\ge (y^{*}_{1}+p_2)p_1$. As the extension we can say $(y_{1}^{*}+(l-1)p_2, y_{2}^{*}-(l-1)p_1)$ is also an answer to the USSP equation if $y^{*}_{2}-(l-1)p_1 \ge 0$.
Now suppose $y_{2}^{*}-lp_1<0$. Then, the value of $(y^{*}_1, y^{*}_2)$, as a seed, defines a coset of $l$ valid values for $y_1$ and $y_2$ as
	\begin{align}
	\{(y_{1}^{*}, y_{2}^{*}), (y_{1}^{*}+p_2, y_{2}^{*}-p_1), \dots, (y_{1}^{*}+(l-1)p_2, y_{2}^{*}-(l-1)p_1)\}	
	\end{align}  
	The value of $y_2$ must be non-negative, so the value of $l$ is obtained as below.\\
	\begin{align}
	&y^{*}_{2}-(l-1)p_1 \ge 0\\
	&\Rightarrow l-1=\bigg\lfloor\frac {y^{*}_{2}}{p_1} \bigg\rfloor
	=\bigg\lfloor \frac{s-p_1y_{1}^{*}}{p_1p_2}\bigg\rfloor\\
	&\Rightarrow l=\bigg\lfloor \frac{s-p_1y_{1}^{*}}{p_1p_2} \bigg\rfloor+1
	\end{align}
	 Note that if $gcd(p_1,p_2)= e$, then the equation $y_1p_1+y_2p_2=s$ has answer if $e|s$. If $gcd(p_1,p_2)=e\ne 1$ and $e|s$ then $p_1^{-1} \bmod p_2$ does not exist, so the value $sp_1^{-1} \bmod p_2$ cannot be computed. In such cases we can easily replace $p_1, p_2$ and $s$ by  $\frac{p_1}{e},\frac{p_2}{e}$ and $\frac{s}{e}$ in the $sp_1^{-1} \bmod p_2$. The answers will satisfy the original instance of the USSP. Note that if $gcd(p_1,p_2)=e\ne 1$ and $e\not|s$ then the equation does not have answer at all.
	\end{proof}
In 1884, Sylvester showed that half on integers $0\le s\le p_1p_2-p_1-p_2$ cannot be expressed by linear combination of $p_1$ and $p_2$ by non-negative coefficients.  
\begin{theorem}{(Sylvester Theorem.)}
Let $z=p_1p_2-p_1-p_2$. Exactly half of the integer values $s$ in the interval $0 \le s \le z$ can be represented by linear combination of $p_1$ and $p_2$ with non-negative coefficients. The following set include these numbers \cite{Sylveste}.
\begin{align} 
\{&0p_2+0p_1, 0p_2+p_1,\dots, 0p_2+\lfloor\frac{z}{p_1}\rfloor p_1,\nonumber \\
&p_2+0p_1, p_2+p_1,\dots, p_2+\lfloor\frac{z-p_2}{p_1}\rfloor p_1,\nonumber \\
&\rotatebox{90}{\dots} \nonumber \\ 
&\lfloor\frac{z}{p_2}\rfloor p_2+0p_1 \}
\end{align}
In other words, any non-negative integer $z-ip_1-jp_2$ for $i \in \{1,\dots,p_2-1\}$ and $j \in \{1,\dots,i\}$ is not representable by linear combination of $p_1$ and $p_2$ with non-negative coefficients. 
\end{theorem}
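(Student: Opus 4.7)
The plan is to establish Sylvester's theorem via the classical duality $s \leftrightarrow z - s$ on the interval $[0,z]$, using the canonical representation from Theorem \ref{YN2} as the main technical device.

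First, I would recall that for every integer $s \ge 0$, Theorem \ref{YN2} gives the unique canonical coefficients $y_1^{*}(s) = s p_1^{-1} \bmod p_2 \in \{0,1,\dots,p_2-1\}$ and $y_2^{*}(s) = (s - p_1 y_1^{*}(s))/p_2 \in \mathbb{Z}$, with the property that $s$ is representable by a non-negative combination of $p_1,p_2$ if and only if $y_2^{*}(s) \ge 0$. This reduces the question from checking infinitely many representations to examining a single integer $y_2^{*}(s)$.

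The core step is to analyze the involution $\sigma(s) = z - s$ on $\{0,1,\dots,z\}$ and show it interchanges representable and non-representable numbers. Since $p_1 y_1^{*}(z-s) \equiv z - s \equiv -p_1 - s \pmod{p_2}$, a short computation using $z = p_1p_2 - p_1 - p_2$ gives
\begin{align}
y_1^{*}(z-s) &= p_2 - 1 - y_1^{*}(s),\\
y_2^{*}(z-s) &= -1 - y_2^{*}(s).
\end{align}
The second identity is the punchline: exactly one of $y_2^{*}(s)$ and $y_2^{*}(z-s)$ is non-negative, so $s$ is representable if and only if $z - s$ is not.

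To conclude the counting statement, I would observe that $z = p_1 p_2 - p_1 - p_2$ is always odd for coprime $p_1, p_2 \ge 2$ (one of them being even forces $z$ odd; both odd also forces $z$ odd), so $\sigma$ has no fixed point in $\{0,1,\dots,z\}$ and partitions the $z+1$ integers into $(z+1)/2$ disjoint pairs, each containing exactly one representable and one non-representable value. Finally, the explicit set displayed in the statement is obtained by enumerating the representable $s = j p_2 + i p_1 \le z$ with $j \in \{0,1,\dots,\lfloor z/p_2\rfloor\}$ and, for each $j$, letting $i$ range from $0$ to $\lfloor (z-jp_2)/p_1\rfloor$; its complement, via the duality above, is precisely the family $\{z - ip_1 - jp_2\}$ for the parameter ranges indicated.

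The main obstacle is verifying the two identities for $y_1^{*}(z-s)$ and $y_2^{*}(z-s)$ without sign-of-modulus slips, especially handling the boundary case $y_1^{*}(s) = 0$ where the formula $p_2 - 1 - y_1^{*}(s)$ must still land in the canonical range $\{0,\dots,p_2-1\}$; once this is clean, the rest of the argument is a straightforward pairing count.
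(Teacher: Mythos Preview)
The paper does not actually prove this theorem: it is stated with a citation to Sylvester \cite{Sylveste} and no \texttt{proof} environment follows. So there is no in-paper argument to compare against.

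Your proposal is the standard classical proof and it is correct. The two key identities
\[
y_1^{*}(z-s)=p_2-1-y_1^{*}(s),\qquad y_2^{*}(z-s)=-1-y_2^{*}(s)
\]
do hold: from $z\equiv -p_1\pmod{p_2}$ one gets $y_1^{*}(z-s)\equiv -1-y_1^{*}(s)\pmod{p_2}$, and since $y_1^{*}(s)\in\{0,\dots,p_2-1\}$ the reduced value is exactly $p_2-1-y_1^{*}(s)$ (including the boundary $y_1^{*}(s)=0$). Substituting into the definition of $y_2^{*}$ and using $z=p_1p_2-p_1-p_2$ gives the second identity verbatim. The parity check that $z$ is odd is also fine, so the involution $s\mapsto z-s$ is fixed-point-free and each orbit contains exactly one representable element.

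One small caution: at the end you assert that the complement ``is precisely the family $\{z-ip_1-jp_2\}$ for the parameter ranges indicated'' in the statement. Your duality argument does yield that the non-representable numbers are exactly $\{\,z-(ip_1+jp_2):i,j\ge 0,\ ip_1+jp_2\le z\,\}$, but the specific ranges printed in the paper, $i\in\{1,\dots,p_2-1\}$ and $j\in\{1,\dots,i\}$, do not match this description (a quick cardinality count already shows a discrepancy). That is an issue with the paper's formulation rather than with your argument; just be careful not to claim your proof certifies those particular index ranges.
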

In Theorem \ref{frac}, we show that exactly $\lceil \frac{s}{p_1}\rceil+ \sum_{i=0}^{k}\lfloor\frac{s-ip_1}{p_2}\rfloor$ of integers between $0 \le r \le s$, $s\le p1p2-p1-p2$ can be expressed by linear combination of $p_1$ and $p_2$ with non-negative coefficients.
\begin{theorem}
\label{frac}
Let $z=p_1p_2-p_1-p_2$, then for any integer $s$ smaller than $z$, the $\lceil \frac{s}{p_1}\rceil+ \sum_{i=0}^{k}\lfloor\frac{s-ip_1}{p_2}\rfloor$ integers in the interval $0 \le r \le s$ can be expressed by linear combination of $p_1$ and $p_2$ with non-negative coefficients.
\end{theorem}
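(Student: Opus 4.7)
The plan is to exploit the fact that, for $s<z=p_1p_2-p_1-p_2<p_1p_2$, every representable integer in $[0,s]$ has a \emph{unique} non-negative representation $r=ap_1+bp_2$. Granting this, counting representable $r$ reduces to counting lattice points $(a,b)\in\mathbb{Z}_{\ge 0}^2$ with $ap_1+bp_2\le s$, which is a one-line stratification by $a$.

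First I would establish uniqueness. By Theorem \ref{YN2} applied with sum $r\le s$, any two non-negative representations of $r$ are related by shifts $(a,b)\mapsto(a+p_2,b-p_1)$, so the existence of two distinct representations of $r$ forces $r\ge p_1p_2$. Since $r\le s<z<p_1p_2$, uniqueness follows, and the set of representable $r\in[0,s]$ is in bijection with the set $P=\{(a,b)\in\mathbb{Z}_{\ge 0}^2:ap_1+bp_2\le s\}$.

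Next I would count $|P|$ by stratifying over $a$. For each fixed $a$ with $0\le a\le\lfloor s/p_1\rfloor$, the admissible values of $b$ are $0,1,\dots,\lfloor(s-ap_1)/p_2\rfloor$, contributing $1+\lfloor(s-ap_1)/p_2\rfloor$ pairs. Summing over $a$ and setting $k=\lfloor s/p_1\rfloor$ yields
\begin{equation*}
|P|=(k+1)+\sum_{i=0}^{k}\left\lfloor\frac{s-ip_1}{p_2}\right\rfloor.
\end{equation*}
To match the stated formula, I would rewrite $k+1=\lfloor s/p_1\rfloor+1$ as $\lceil s/p_1\rceil$ in the generic case $p_1\nmid s$, and verify the boundary case $p_1\mid s$ separately (where the top term $i=k$ of the sum contributes an extra unit that absorbs the discrepancy between $\lfloor s/p_1\rfloor+1$ and $\lceil s/p_1\rceil$).

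The main obstacle is the uniqueness step: it must be argued cleanly that Theorem \ref{YN2}'s shift structure exhausts all representations, so that no representable $r\le s$ is double-counted when we pass from lattice points to integers. Once uniqueness is in hand, the rest is a routine double count, with only the minor bookkeeping issue of reconciling $\lfloor s/p_1\rfloor+1$ with $\lceil s/p_1\rceil$ at the divisibility boundary.
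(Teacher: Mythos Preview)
Your core argument is correct and is in fact cleaner than the paper's. The paper partitions $\{0,\dots,z\}$ into blocks $[ip_1,(i+1)p_1-1]$ and, for each block, tracks which residues $jp_2\bmod p_1$ have already appeared, then sums the block contributions. You instead invoke uniqueness of representations below $p_1p_2$ to pass directly to a lattice-point count, stratified by the $p_1$-coefficient $a$. Both routes rest on the same uniqueness fact (the paper uses it implicitly when asserting that the representable integers in a block are distinct), but your version isolates it explicitly via the shift structure of Theorem~\ref{YN2} and avoids the paper's somewhat opaque residue bookkeeping.

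There is, however, one genuine gap: your handling of the boundary case $p_1\mid s$ does not work. When $p_1\mid s$ and $k=s/p_1$, the top term of the sum is $\lfloor(s-kp_1)/p_2\rfloor=\lfloor 0/p_2\rfloor=0$; it contributes no ``extra unit'' to absorb anything. Your own count $(k+1)+\sum_{i=0}^{k}\lfloor(s-ip_1)/p_2\rfloor$ is the correct number of representable integers in $[0,s]$, and when $p_1\mid s$ this exceeds the stated expression $\lceil s/p_1\rceil+\sum_{i=0}^{k}\lfloor(s-ip_1)/p_2\rfloor$ by exactly one. For instance, with $p_1=3$, $p_2=5$, $s=6$ the representable integers are $0,3,5,6$ (four of them), while the stated formula gives $2+1=3$. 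So the discrepancy lies in the theorem's formula at the divisibility boundary, not in your lattice-point argument; you should flag this off-by-one rather than claim a nonexistent cancellation.
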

\begin{proof}
  Let's partition $\{0,\dots, z\}$ into $k=\lceil \frac{z}{p_1} \rceil$ sets $\{ip_1,\dots,(i+1)p_1-1\}$, for $i \in \{0,\dots,k\}$, (Ignore the lack of the last set.). One can easily check that the integers $r_1=ip_1$ and $r_2=jp_2$ will be  in $ip_1\le r_1< (i+1)p_1$ and $xp_1\le r_2< (x+1)p_1$ respectively, for $j \in \{0,\dots,\lfloor z/p_2\rfloor \}$, where $x= \lfloor \frac{jp_2}{p_1}\rfloor$. Let $e=jp_2 \bmod p_1$, then $e$'th element of any interval $\acute{i}\ge x+i$ can be expressed by $jp_2+ip_1$, for $i \in \{0,\dots,k\}$.
  As a result, for integers $ip_1\le r< (i+1)p_1$, the integer $ip_1$  plus another $\lfloor\frac{(i+1)p_1-1}{p_2}\rfloor$ integers can be expressed by $p_1$ and $p_2$.
  Let $s \in \{0,\dots,z\}$ then $\lceil \frac{s}{p_1}\rceil+ \sum_{i=0}^{k}\lfloor\frac{s-ip_1}{p_2}\rfloor$ integers $0\le r \le s$ can be represented by linear combination of $p_1$ and $p_2$.\\
 As observed, the number of integers which can be expressed by linear combination of $p_1$ and $p_2$ grows linearly with approaching $s$ to $z$. So, the percentage of $2-$USSP instances $y_1p_1+y_2p_2=r$, for $r =\{0,1,2,\dots,s\}$, $s\le z$ is approximated by $\frac{s}{2z}$. 
\end{proof}
  
  Based on Theorem \ref{YN2}, we define $A2(s,p_1,p_2)=y_1$ as an algorithm which computes the 2-USSP with time and memory complexity $O(1)$. The $A2(s,p_1,p_2)=y_{1}^{*}$ works as shown in Algorithm A2.\\
  \begin{algorithm}
  	\label{A2}
  	\textbf{Input: }$s, W=\{p_1,p_2\}$\\
  	\textbf{Output: }$y_{1}^{*}$\\
  	$e=gcd(p_1,p_2)$\\
	\eIf{$gcd(p_1,p_2)\not| s$}{\textbf{Return}($\bot$)}
	{$s\leftarrow s/e$\\
	$p_1\leftarrow p_1/e$\\
	$p_2\leftarrow p_2/e$\\
	$y_{1}^{*}= s p_{1}^{-1} \bmod p_2$\\
	\textbf{Return}$(y_{1}^{*})$\\}
	\caption{The algorithm $A2(s,p_1,p_2$)}
\end{algorithm} 
\section{Unbounded Subset-Sum Problem}
\label{n-ussp}
Assume $W=\{p_1,\dots , p_n\}$ such that for $j= \{1,\dots, n-1\}$ we have $p_{j+1} > p_j$, that is, $W$ is sorted in ascending order. The USSP can be solved just by invoking the $A2(s, p_{1}, p_{2})$ algorithm as follows:
\begin{align}
	s_1&=s\\
	y_1&=A2(s_1,p_1,p_2)\\
	s_2&=s_1-y_1p_1\\
	y_2&=A2(s_2,p_2,p_3)\\
	&\rotatebox{90}{\dots} \nonumber \\
	y_{n-1}&=A2(s_{n-1},p_{n-1},p_n)\\
	s_{n}&=s_{n-1}-y_{n-1}p_{n-1}\\
	y_n&=s_n/p_n
\end{align}
Some one may says there is no need for equations (11)-(14); it can stop at $s_2$ because $p_2|s_2$, so we have a solution as long as $s_2>0$. But its not true, because a correct solution is the one that picks from all elements of $W$ not just two of them. 
The algorithm $A2(s, p_{1},p_{2})$ works only if $gcd(p_{1},p_{2}$)$|s$ and $s>p_{1}(sp_1^{-1} \bmod p_{2})$. These conditions must be met just for the first pair ($p_1, p_2$). Because, if $gcd(p_1,p_2)|s$, then we have  $gcd(p_2,p_3)|p_2|s_2$ and as an extension we can say that $gcd(p_{i},p_{i+1})|s_{i}$. We have $n-1$ equations\\
\begin{align}
s&=s_2+y_1 p_1\\
s_2&=s_3+y_2 p_2\\
&\rotatebox{90}{\dots} \nonumber \\
s_{n-1}&=y_{n} p_{n}+y_{n-1} p_{n-1}
\end{align}
Then we have
\begin{align}
s+s_2+\dots+s_{n-1}&=y_1p_1+\dots+y_np_n+s_2+\dots+s_{n-1}\nonumber\\ 
\Rightarrow s&=y_1p_1+\dots+y_np_n
\end{align}   
As observed just by using A2 algorithm the USSP problem can be solved only if the greatest common divisor of the two smallest elements of $W$ divides $s$. To overcome this limitation, we propose an efficient algorithm denoted by Algorithm 3 with $O(n)$ computational complexity. This algorithm just needs $gcd(p_1,\dots, p_n)|s$. Let $W_1$ be a small subset of $W$, where $gcd$ of its elements divides $s$, then, in the first step, the Algorithm 2 finds $W_1$, and second, Algorithm 3 solves the USSP problem for $W_1$ and $s$, and in the final step, it solves the USSP problem for other elements of $W$. The Algorithm 2  and 3 works as follows.\\
\begin{algorithm}[H]
	\label{algorithm2}
	\textbf{Input: }$s, W=\{p_1,\dots,p_n\}$\\
	\textbf{Output: } A small subgroup of $W$, where $gcd$ of its elements divides $s$, or $\bot$\\
	\If{$gcd(p_1,\dots , p_n)\not|s$}{
		\textbf{Return}$(\bot)$}
	$q_1,\dots,q_n=0$\\
	$k=0$\\
	\For{$i=n \ to \ 1$}{
		\If{$gcd(q_1,\dots,q_k,W-\{p_i\}) \not| s$}{
			$k\leftarrow k+1$\\
			$q_{k}=p_i$ \color{blue} \ \% Element of  $W_1$.\color{black}\\	
		}
	$W \leftarrow W-\{p_i\}$ 		\color{blue} \ \% Delete $p_i$ from  $W$.\color{black}\\
	\If{$gcd(q_1,\dots,q_k)|s$}{$break$}
	}
	 \textbf{Return} $(W_1=\{q_k,\dots,q_1\})$ 	
	  \caption{Finding a small subset of $W$, where $gcd$ of its elements divides $s$.}
\end{algorithm}

The idea of Algorithm 3 is that we choose $y_i$ such that $gcd(q_2,\dots,q_k)$ divides $s-y_ip_i$, where $p_i=q_1$. As a result, the size of $W_1$ can be reduced to 2, and then, by running $A2(s,q_{k-1},q_k)$ the USSP can be solved. As observed, the Algorithm 3 works for any integer bigger than $q_{k-1}q_{k}-q_{k}-q_{k-1}+\sum_{i=1}^{k-1}q_ix_i-q_i-x_i$, where $x_i=gcd(q_i,\dots,q_k)$. In the worst case, $x_i=q_{i+1}$, so, in the worst case, the Algorithm 3 works for any integer bigger that $z_1=\sum_{i=1}^{k-1}q_iq_{i+1}-q_i-q_{i+1}$.\\

\begin{algorithm}[H]
	\label{algorithm3}
	\textbf{Input: }$s, W=\{p_1,\dots,p_n\}$\\
	\textbf{Output: }$(y_1,\dots ,y_n)$ such that $p_1y_1+\dots+p_ny_n=s$ or $\bot$\\
	$W_1$=$\{q_1,\dots,q_{k}\}$=Algorithm2($s, W=\{p_1,\dots,p_n\}$)\\
	$y_1,\dots,y_n=-1$\\
	\For{$i=1 \ to \ k-1$}{
		\For{$j=1 \ to \ n$}{  
			\If{$p_j=q_i$}{
				$break$ \color{blue} \ \% Find $q_i$ in $W$.\color{black}
			}
		}
		\eIf{$i<k-1$}
		{
			$x=gcd(q_{i+1},\dots ,q_{k})$
		}
		{
			$x=q_k$
		}
		\If{$s<p_j(sp_{j}^{-1} \bmod x)$}
			{Return($\bot$)\color{blue} \ \% $s$ is too small\color{black}}
		$y_j=A2(s, p_j,x)$ \color{blue} \ \% Solving the USSP for elements of $W_1$.\color{black}\\
		$s \leftarrow s-p_j y_j$ \color{blue} \% Now $gcd$ of $W_1-\{q_i\}$ divides $s$\\ 
	}
	\color{blue} \% Now $q_{k}$ divides $s$.\color{black}\\
	\For{$j=1 \ to \ n$}{
		\If{$p_j=q_{k}$}
		{$break$  \  \color{blue} \% Find $q_k$ in $W$.\color{black}}
	}
\end{algorithm}

\begin{algorithm}

	\For{$i=j+1 \ to \ n$}{
		\eIf{$s\ge p_j(sp_{j}^{-1} \bmod p_i)$}{
			$y_j=A2(s,p_j, p_{i})$\\
			$s \leftarrow s-p_jy_j$ \color{blue} \small \% Solving the problem for the rest of elements of $W$. \color{black}\\
			$j=i$	
	}

		{$y_i=0$}
	} \color{blue} \% Now $s$ is a multiple of $p_{j}$, where $j\le n$.\\ \color{black}
	       
	\For{$i=1 \ to \ j-1$}{
		\If{$y_i=-1$}{
			\eIf{$s\ge p_jp_i$}{
				$y_i=p_j$ \\ 
				$s \leftarrow s-p_iy_i$\color{blue} \ \% $s$ still is a multiple of $p_{j}$\\ \color{black}}
			{$y_i=0$}
	}}
	\eIf{$s=0$}{$y_j=0$}{$y_j=s/p_j$}
  \textbf{Return}$(y_1,\dots,y_{n})$ 
  	\caption{The proposed algorithm for solving the USSP}
 \end{algorithm}
\     
 \section{ An Algorithm for the USSP with Small $s$}
 \label{small}
 Let elements of $W$ be independent random integers uniformly chosen from $\{2,3,\dots,s\}$ and $s \in \{p_1,p_1+1,\dots,\sum_{i=1}^{k-1}q_iq_{i+1}-q_i-q_{i+1}\}$ be a uniform random integer, then the USSP problem can be solved with probability more than $0.5$ by Algorithm 3, if there is a subgroup $W_1=\{q_1,\dots,q_k\}$, in which the $gcd$ of its elements divides $s$. The Algorithm 4 can improve the success probability by calling Algorithm 3 with another subset $W_2$ (if there exists any), in which $gcd$ of its elements divides $s$. The success probability of running Algorithm 3 with two different subsets can be increased up to $0.5+0.5*0.5=0.75$. As observed, the success probability increases logarithmically, from $0.5$ to less than one, with increasing the number of subgroups of $W$ in which $gcd$ of their elements divides $s$. 
 The Algorithm 4 succeeds with high probability if the input is given at random and $n$ is a large integer (i.e. it does not work for all inputs).\\
 Note that, if the elements of $W$ are not independent then the number of subsets, in which $gcd$ of its elements divides $s$, decreases. As a result, the success probability of this algorithm decreases too, but it never falls less than $0.5$.
\begin{algorithm}[H]
	\label{algorithm4} 
	\textbf{Input: }$s, W=\{p_1,\dots,p_n\}$\\
	\textbf{Output: }$(y_1,\dots,y_n)$ or $\bot$\\
	\For{$i=1 \ to \ n$}{
		$W_1=q_1,\dots,q_k$=Algorithm2($s, W=\{p_1,\dots,p_n\}$)\\
		\eIf{Algorithm3($s, W_1=\{q_1,\dots,q_k\}$)=$\bot$}{
			\For{$j=1 \ to \ n$}{
				\If{$p_j=q_1$}{
				$break$ \color{blue} \ \% Find $q_1$ in $W$.\color{black}}
			}
			$W \leftarrow W-\{p_j\}$}
		{	$x_1,\dots,x_k$=Algorithm3($s, W_1=\{q_1,\dots,q_k\}$)\\
			$i=1$\\
				\For{$j=1 \ to \ n$}{
					\eIf{$q_i=p_j$}{$y_j=x_i$\\$i=i+1$}{$y_j=0$}
				}
				\textbf{Return}$(y_1,\dots,y_{n})$\\
			}
		}		
\textbf{Return}($(\bot)$)
	\caption{The proposed algorithm for solving USSP with small values of $s$}
\end{algorithm}
As observed, Algorithm 4 can find at most $n-1$ subsets of $W$ (if exists.), in which $gcd$ of their elements divides $s$. The success probability can be at most $1-{\frac{1}{2}}^{n-1}$, but the size of these subsets are different and therefore they have also different thresholds in Algorithm 3. So, to compute the success probability of Algorithm 4, we have to estimate the percentage of the integers $0\le s \le z_1$ which can be expressed by linear combination of elements of other $n-2$ subgroups.\\ 
 Let $W_2$ be another small subgroup of $W$ in which $gcd$ of its elements divides $s$, then, $50\%$ of integers $0\le s \le z_2$ can be represented by linear combination of elements of $W_2$, and based on Theorem \ref{frac}, about $\frac{z_{1}}{2z_{2}}$ of integers $\{0,\dots,z_{1}\}$ can be expressed by linear combination of elements of $W_2$. Let $1\le f <n$ be the number of subsets of $W$, in which $gcd$ of their elements divides $s$, then, we can estimate the number of integers $0\le s \le z_{1}$ which cannot be expressed by the linear combination of elements of $W$ as below.
\begin{align}
P_{Fail}=\prod_{i=1}^{i=f} (1-(\frac{z_{1}}{2z_{i}}))
\end{align}  
We can say that the Algorithm 4 works for any integer $0 \le s \le z_{1}$ with probability equal to 
\begin{align} 
P_{Success}&=1-\prod_{i=1}^{i=f} (1-(\frac{z_{1}}{2z_{i}}))
\end{align}
Note that, no matter how much the value of $n$ goes up, for the integers $0<s<p_1$ the USSP instance does not have answer. So, it is notable to consider $P_{Success}$ as the probability that the integers $p_1\le s\le z_{1}$ has an answer in USSP problem.
\begin{example}
Let $W=\{11, 13, 15, 19, 21\}$. So $p_1 = 11$ and $z_{1,2}=119$, then the Algorithm 4 can solve $85$ percent of USSP problem, where $p_1\le s \le z_1$, and almost all of this $15$ percentage lies in $p_1\le s \le \frac{z_1}{2}$. We show that our estimation is equal to $80$ percent.\\
\begin{align}
&\frac{z_{1}}{2z_{1}}=\frac{119}{238}=0.5\nonumber\\
&\frac{z_{1}}{2z_{2}}=\frac{119}{334}=0.35\nonumber\\
&\frac{z_{1}}{2z_{3}}=\frac{119}{418}=0.28\nonumber\\
&\frac{z_{1}}{2z_{4}}=\frac{119}{718}=0.16\nonumber\\
&P_{Success}=1-\prod_{i=1}^{i=f} (1-(\frac{z_{i}}{2z_{1}}))\nonumber \\
&=1-(1-0.5) \times (1-0.35) \times (1-0.28)\times (1-0.16) \nonumber \\
&\cong 0.80
		\end{align}
\end{example}

\section{Conclusion}
\label{Conclution}
The unbounded subset-sum problem is a well-known and well-studied NP-complete problem and there is a formal reduction from USSP to SSP. The USSP problem does not have any known polynomial solution. In this paper, we proposed two efficient polynomial-time algorithms for solving the special cases of the USSP problem with different conditions on $s$. The Algorithm 3 solves the USSP for $s >\sum_{i=1}^{k-1}q_iq_{i+1}-q_i-q_{i+1}$ with $O(n)$ computational complexity, where $q_i$'s are the elements of small subset of $W$ in which $gcd$ of its elements divides $s$. Algorithm 4 is dedicated for smaller values of $s$, where only several exceptions happen for some values of $s$ in the interval $p_1 \le s \le \sum_{i=1}^{k-1}q_iq_{i+1}-q_i-q_{i+1}$ in the sense that the USSP instance may have an answer but our algorithm fails to find it. We estimated the probability of this failure and noticed it decreases to zero by a small increase of $n$. The complexity of our algorithm is $O(n^2)$.

\end{document}